\definecolor{toc}{RGB}{13,55,174}	
\newtheorem{theorem}{Theorem}[section]
\newtheorem{proposition}[theorem]{Proposition}
\newtheorem{example}[theorem]{Example}
\newcommand{\dist}{\mathcal{D}}
\newcommand{\boxes}{\mathcal{B}}
\newcommand{\reals}{\mathbb{R}}
\newcommand{\lp}{\left}
\newcommand{\rp}{\right}
\newcommand{\E}[2]{\mathbb{E}_{#1}\lp[#2 \rp]}
\renewcommand{\Pr}[2]{\textbf{Pr}_{#1}\lp[#2 \rp]}
\newcommand{\s}{\sigma}
\newcommand{\alg}{\text{ALG}}
\newcommand{\weitz}{\text{WEITZ}}
\date{}
\title{The Role of Commitment in Optimal Stopping}
\author{Jos\'e Correa \\ Universidad de Chile \and
Evangelia Gergatsouli\footnote{Part of the work was done when this author was a Visiting PhD in Center for Mathematical Modeling (CMM) at University of Chile, in March-May 2023.}\\ Georgia Institute of Technology \and  Bruno Ziliotto\footnote{This work was supported by the French Agence Nationale de la Recherche (ANR) under reference ANR-21-CE40-0020 (CONVERGENCE project) and under grant ANR-17-EURE-0010 (Investissements d'Avenir program). Part of this work was done during a 1-year visit of this author to the Center for Mathematical Modeling (CMM) at University of Chile in 2023, under the IRL program of CNRS.} \\ Toulouse School of Economics}
\begin{document}

\maketitle

 We investigate the role of commitment in optimal stopping by studying all the variants between  Prophet Inequality (PI) and Pandora's Box (PB). Both problems deal with a set of variables drawn from known distributions.
 In PI the gambler observes an adversarial order of these variables with the goal of selecting one that maximizes the expected value against a prophet who knows the exact values realized. The gambler has to irrevocably decide at each step whether to select the value or discard it (commitment). 
 On the other hand, in PB the gambler selects the order of inspecting the variables and for each pays an observation cost to see the actual value realized, aiming to choose one to maximize the net cost of the value chosen minus the observation cost paid. The gambler in PB can return and select any variable already seen (no commitment). 
 
 For all the variants between these problems that arise by changing parameters such as (1) commitment (2) observation cost (3)  order selection, we concisely summarize the known results and fill the gaps of variants not yet studied. We also uncover connections to Ski-Rental, a classic online algorithm problem.

\textbf{Keywords}:
Prophet Inequalities, Pandora's Box, optimal stopping, Ski-Rental, online algorithms
\setcounter{page}{0}
\thispagestyle{empty}
\newpage




\section{Introduction}\label{sec:intro}
In the Prophet Inequality problem a gambler is sequentially presented with $n$ independent samples $X_1, X_2, \ldots, X_n$ drawn from known distributions $\dist_1, \dist_2, \ldots, \dist_n$ and she has to irrevocably decide at each step to either select the sample presented to her and stop, or continue and lose the sample just seen. The gambler's reward is  $\E{}{X_\tau}$ where $\tau$ is the time she decides to stop. Her goal is to compete with an all-knowing prophet who can see the values of the samples in advance and choose their reward with this information; therefore, the prophet receives a reward $\E{}{\max_{i\in [n]} X_i}$. 
The classic result in prophet inequalities, shown in~\cite{KrenSuch1977,KrenSuch1978,Samu1984} states that it is possible for the gambler to receive $\E{}{X_\tau} \geq 1/2 \cdot \E{}{\max_i X_i}$.

A similar well-studied problem in economics and optimal stopping theory is Pandora's Box. In this problem, the gambler again sees $n$ independent samples from known distributions $\dist_1, \dist_2, \allowbreak \ldots, \dist_n$, but, in contrast to Prophet Inequality, she can now decide the order of exploring them. In order to see the value of a sample (or ``box"), the gambler has to pay an observation cost (opens the box), and after observing any set of the samples she chooses, she can decide to stop and take any of the samples observed (not necessarily the last seen). The goal is to maximize the best sample value seen minus the total observation cost paid i.e. $\E{}{\max_{i\in \mathcal{O}} X_i} - \sum_{i\in \mathcal{O}} c_i$, where $\mathcal{O}$ is the set of boxes opened. In this case, the gambler compares her expected gain with the online optimal who, unlike the Prophet, cannot see the values instantiated, while the optimal policy is described by Weitzman in~\cite{Weit1979}. Two natural questions arise at this point.
\begin{enumerate}
    \item In Prophet Inequality: what happens if the gambler is not committed to pick the last sample seen, but can turn back and pick an earlier one? \textcolor{black}{In this setting, the gambler can do as well as the Prophet, since he can wait} until the end of the sequence and pick the best one seen. A more natural setting would be to have this problem with an observation cost for each value.
    \item In Pandora's Box: how well does the gambler do against a prophet that knows all the values in advance? This setting is also trivial, since the prophet will only open one box, while the gambler will have to open boxes in the optimal policy described by Weitzman~\cite{Weit1979}. \textcolor{black}{Consequently, the gambler can not compete with the Prophet}. Therefore a more natural setting would be to fix the order at the beginning for both the gambler and the prophet. 
\end{enumerate}

More generally, between Pandora's Box and Prophet Inequalities, there are three parameters that characterize the problems: 
\begin{enumerate}
    \item \emph{Commitment}: whether the gambler can only select the last sample value seen or can pick an option seen earlier in the game.
    \item \emph{Observation cost}: whether the gambler has to pay an extra cost to observe the sample from each distribution.
    \item \emph{Order selection}: whether the gambler has control over the order of observation or it is adversarially presented to her.
\end{enumerate}

Altering each of these parameters leads to a different problem, both from a maximization and from a minimization perspective. We outline below in Table~\ref{table:maximization_summary} and Table~\ref{table:minimization_summary} the versions of the problems that arise for all combinations of these parameters, and which is the result known in each case. The results in \textbf{bold} are presented in this paper.

\begin{table}[H]
\centering
\begin{tabular}{c}
    \textbf{\large MAXIMIZATION} \\ 
\end{tabular}
\vspace{10pt}

\fbox{\textbf{Commitment: YES}}
\vspace{8pt} 
\begin{tabular}{ccccc}
\multicolumn{1}{c|}{\textbf{Best Ratio}}       & $\bm{-\infty}$   & $[0.726, 0.745]$  & $\bm{-\infty}$   & 1/2    \\ 
\multicolumn{1}{c|}{\textbf{Reference/Result}}        & {\ref{example:max_knows_values} }   & {  \cite{BubnaChipl2022, CorrFoncHoekOostVred2017} }   &  {~\ref{example:max_knows_values},\cite{Samu1992} }  & { \cite{KrenSuch1978} }   \\ \cline{1-5}
\multicolumn{1}{c|}{\textbf{Order Selection}}  & Yes & Yes & No  & No    \\ 
\multicolumn{1}{c|}{\textbf{Observation Cost}} & Yes & No  & Yes & No  \\
\end{tabular}
\bigskip

\fbox{\textbf{Commitment: NO}}\\
\vspace{8pt} 
\begin{tabular}{ccccc}
\multicolumn{1}{c|}{\textbf{Best Ratio}}       & $\bm{-\infty}$   & \textbf{1}   & $\bm{-\infty}$   & \textbf{1}    \\ 
\multicolumn{1}{c|}{\textbf{Reference/Result}}        &  {\small \ref{example:max_knows_values} }   &   {\small Trivial~\ref{ex:trivial_max} }   &  {\small \ref{example:max_knows_values}}    & {\small Trivial~\ref{ex:trivial_max}}   \\ \cline{1-5}
\multicolumn{1}{c|}{\textbf{Order Selection}}  & Yes & Yes & No  & No        \\ 
\multicolumn{1}{c|}{\textbf{Observation Cost}} & Yes & No  & Yes & No                          
\end{tabular}

\caption{Best known ratios for all eight variants of the maximization version of the problem. In the case of commitment, no observation cost and order selection (\emph{free order prophet}) the lower bound $0.7258$ is from~\cite{BubnaChipl2022} and the upper bound $1/1.342\approx 0.745$ is from~\cite{CorrFoncHoekOostVred2017}.}

\label{table:maximization_summary}
\end{table}
\normalsize


\begin{table}[H]
\centering
\begin{tabular}{c}
    \textbf{\large MINIMIZATION} \\ 
\end{tabular}
\vspace{10pt}

\fbox{\textbf{Commitment: YES}}
\vspace{8pt} 
\begin{tabular}{ccccc}
\multicolumn{1}{c|}{\textbf{Best Ratio}}       & $\infty$   & $\infty$ & $\infty$   & $\infty$    \\ 
\multicolumn{1}{c|}{\textbf{Reference/Result}}        & B.1 in \cite{LivaMeht2022}  & B.1 in \cite{LivaMeht2022}   & B.1 in \cite{LivaMeht2022}  & B.1 in \cite{LivaMeht2022}   \\ \cline{1-5}
\multicolumn{1}{c|}{\textbf{Order Selection}}  & Yes & Yes & No  & No    \\ 
\multicolumn{1}{c|}{\textbf{Observation Cost}} & Yes & No  & Yes & No  \\
\end{tabular}
\bigskip

\fbox{\textbf{Commitment: NO}}\\
\vspace{8pt} 
\begin{tabular}{ccccc}
\multicolumn{1}{c|}{\textbf{Best Ratio}}       & $\infty$   & \textbf{1 }  & \textbf{1.58*}& \textbf{1}    \\ 
\multicolumn{1}{c|}{\textbf{Reference/Result}}        &  \ref{ex:MIN},\cite{Weit1979}   &  Trivial~\ref{ex:trivial_max}  &  \ref{ex:ski_rental}   & Trivial~\ref{ex:trivial_max}   \\ \cline{1-5}
\multicolumn{1}{c|}{\textbf{Order Selection}}  & Yes & Yes & No  & No        \\ 
\multicolumn{1}{c|}{\textbf{Observation Cost}} & Yes & No  & Yes & No                          
\end{tabular}

\caption{Best known ratios for all eight variants of the minimization version of the problem. The ratio of $1.58$ noted with a \textbf{*} also holds for a weaker prophet that knows the values but not the order of arrival.
}
\label{table:minimization_summary}
\end{table}

In Section~\ref{sec:prelims}, we provide a brief overview of related problems. Then, in Sections~\ref{sec:max_examples} and \ref{sec:min_examples}, we present the results from Table~\ref{table:maximization_summary} for the maximization version and Table~\ref{table:minimization_summary} for the minimization version, respectively.

\section{Preliminaries}\label{sec:prelims}

We briefly describe the two main problems, Prophet Inequality and Pandora's Box, which motivated us to study their variations presented in Tables~\ref{table:maximization_summary} and ~\ref{table:minimization_summary}. We also give a brief overview of Ski-Rental, a problem that comes up in the minimization variant for the case with observation cost but without order selection.

\paragraph{\textbf{Prophet Inequality}~\cite{LivaMeht2022}} 

In the cost minimization Prophet Inequality we are given $n$ distributions $\mathcal{D}_1, \ldots, \mathcal{D}_n$ and at every step $t$ we see a sample drawn from $X_t\sim \mathcal{D}_t$. We have to decide to either take it or continue and lose it forever. Our goal is to pick the minimum sample compared to a Prophet who can see all samples in advance, i.e. 
\[
\text{Prophet} = \E{\dist}{\min_{i\in[n]}X_i}.
\]

\paragraph{\textbf{Pandora's Box}~\cite{Weit1979,BeyhCai2023}} 
In Pandora's Box we are given a set  $\boxes$ of $n$ boxes, each with a known opening cost $c_b\in\reals_+$, and containing an unknown value $v_b$ drawn from a known distribution $\dist_b$. Each box $b\in \boxes$, once it is opened, reveals the value $v_{b}$.

The algorithm can open boxes sequentially, by paying the opening cost each time, and observe the value instantiated inside the box. The goal of the algorithm is to choose a box of small value, while spending as little cost as possible ``opening" boxes. Formally, denoting by $\mathcal{O} \subseteq \boxes$ the set of opened boxes, we want to minimize
\[\E{v\sim \dist}{ \min_{b\in \mathcal{O}} v_b + \sum_{b\in \mathcal{O}}c_b}.\]

\paragraph{Weitzman's Algorithm for Pandora's Box}
In the minimization Pandora's Box, \cite{Weit1979} described a greedy algorithm that achieves the optimal (on expectation) cost. The core of this algorithm is the calculation of an index $\s_b$ for every box $b$, called \emph{reservation value}, formally defined as the value that satisfies the following equation
\begin{equation}\label{eq:reservation_value}
    \E{v \sim \mathcal{D}_b}{(\s_b-v)^+} = c_b,
\end{equation}

where $(a-b)^+ = \max(0, a-b)$. Then, the boxes are ordered by increasing $\sigma_b$ and opened until the minimum value revealed is less than the $\s_b$ of next box in the order. The algorithm is formally described in Algorithm~\ref{alg:weitzman_main}.\\

\begin{algorithm}[H]
		\KwIn{Boxes with costs $c_b\in \mathbb{R}^+$, distribution  $\mathcal{D} = (\mathcal{D}_1, \mathcal{D}_2, \ldots, \mathcal{D}_n)$.}

       An unknown vector of values $\bm{v} \sim \mathcal{D}$ is drawn\\
        Calculate $\sigma_b$ for each box $b$ using equation~\eqref{eq:reservation_value} and rename boxes such that $\s_1\leq \s_2\leq  \ldots \leq \s_n$\\
$V_{\textbf{min}} \leftarrow \infty$\\
	\For{every box $i=\{1, \ldots, n\}$} {
		 Pay $c_i$, open box $i$ and observe $V_i$\\
			$V_{\textbf{min}} \leftarrow \textbf{min}(V_{\textbf{min}}, V_i)$\\
			\If{$V_{\textbf{min}} < {\s}_{i+1}$}{
				Stop and collect $V_{\textbf{min}}$\\
   }
}
		\caption{Weitzman's algorithm.}\label{alg:weitzman_main}
	\end{algorithm}

\paragraph{\textbf{Ski-rental}~\cite{KarlManaMcgeOwic1990}} In the ski rental problem, a gambler is going skiing for an unknown number of days (e.g depending on the weather). Since she does not own skis, every day she decides to either rent skis at a cost of 1, or buy them at a cost of $B$ and incur no further costs. The prophet knows exactly when the last ski day $T$ will be, and therefore can decide up front if it is cheaper to buy at the beginning or only rent; their cost is
\[\text{Ski-prophet} = \min_{i\in [T]} (i+X_i) = \min(T,B).\]

\section{Results for Maximization}\label{sec:max_examples}


\begin{proposition}\label{ex:trivial_max}
Whenever there is \textbf{no} commitment and \textbf{no} opening cost, the best ratio is 1. 
\end{proposition}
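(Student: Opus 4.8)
The plan is to produce a gambler strategy that, on \emph{every} realization of the value vector, collects exactly what the prophet collects, and then invoke the fact that the prophet upper-bounds any strategy to conclude the ratio is pinned at $1$. The conceptual point I would stress at the outset is that the two relaxations in force remove the only two features that normally prevent the gambler from matching the prophet: the observation cost (which penalizes seeing many samples) and the commitment constraint (which forbids returning to an earlier value).

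First I would use the absence of an opening cost to argue that inspecting a sample is free, so the gambler can afford to observe all $n$ variables. Crucially, this step is insensitive to the order-selection parameter: whether she picks the inspection order or it is imposed adversarially, free observation lets her see the whole sequence either way, which is why a single argument covers both order-selection columns of the no-commitment, no-cost row. At the end of this phase the gambler knows the realized values $X_1,\dots,X_n$. Then I would invoke the absence of commitment: she is not bound to the last value seen and may turn back, so she simply selects an index attaining $\max_{i\in[n]} X_i$ and collects that value. This equals $\max_{i\in[n]}X_i$ pointwise.

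To finish, I would take expectations, so that the gambler's reward is $\E{}{\max_{i\in[n]} X_i}$, which is literally the prophet's reward; hence $\alg \ge \opt$ and the ratio is at least $1$. For the matching upper bound I would note that no strategy can exceed the prophet, who selects the realized maximum with full information, so the ratio is at most $1$; combining gives exactly $1$. I do not expect a genuine obstacle here—the result is flagged as trivial in the table—and the only point deserving an explicit word is the uniformity over order selection, namely that free observation makes the construction work identically whether or not the gambler controls the inspection order.
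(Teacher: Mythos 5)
Your proof is correct and takes essentially the same approach as the paper's one-line argument: with no opening cost the gambler observes everything for free, and with no commitment she returns to the best sample, matching the prophet pointwise, with order selection irrelevant. The only addition in the paper is the remark that the identical argument covers the minimization version (select the minimum instead of the maximum), to which your argument extends verbatim.
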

\begin{proof}
The decision-maker and the prophet have the same optimal algorithm: open everything, without cost, and choose the best sample. This is true for both minimization and maximization.
\end{proof}
\begin{example}[Ratio is $-\infty$]\label{example:max_knows_values}
The following example works for the cases of the maximization version where there is an observation cost. The variables are all i.i.d. for $k=1,\ldots, n$
\[X_k = \begin{cases}
  n  & \text{w.p. } 1/n^{1.5}\\
   0 & \text{w.p. }1-1/n^{1.5}.
\end{cases}
\]
 Denote by $p=1-n^{-1.5}$. When only 0 are realized the optimal will just stop at the first box (paying -1), otherwise the gain is $n$ and at each step $i$ the probability of stopping is $\Pr{}{\text{stop at step }i}=p^{i-1} (1-p)$. Therefore we calculate the gain of the prophet as 
\begin{align*}
\text{Prophet} & = p^n(-1)  +  (1-p^n)n - \sum_{i=1}^n i\cdot \Pr{}
{\text{stop at step }i} & \\
& = n - p^n - \frac{1-p^n}{1-p} & \text{From }~\ref{eq:sum_max_p}\\
& = n-n^{1.5}+ p^n\lp(n^{1.5}-1 \rp) & \text{Def of }p \text{ \& rearrange}\\
& \approx n - n^{1.5} + \lp(1-n^{-0.5} - \frac{1}{2n} \rp)(n^{1.5}-1) & \text{Exp limit and Taylor}\\
& = -2 + n^{-1.5} + \frac{1}{2 n} - \frac{n^{0.5}}{2} + n > 0
    \end{align*}

where in the $\approx$ above we used that $p^n = (1-n^{-1.5}) \approx e^{-n^{-0.5}} \approx 1-n^{-0.5}+1/2 n^{-1}$ from the exponential limit for $n$ large and Taylor approximation. While the algorithm will either try to find a $n$ or stop at the first box getting
\begin{align*}
\alg & = \max\lp\{-1+n(1-p), 
p^n(-n) +  (1-p^n) n - \sum_{i=1}^n \Pr{}{\text{Stop at step }i} i\rp\} \\
& = \max \lp\{-1 + n(1-p), n-np^n-\frac{1-p^n}{1-p}\rp\}\\
& = \max \lp\{-1+\frac{n}{n^{1.5}}, (1-p^n)(n-n^{1.5})\rp\} \text{Def of }p \text{ and rearrange}\\
& <0 \text{, as $n$ grows.}
\end{align*}

\noindent 
For both these calculations we used that 
\begin{align}\notag
\sum_{i=1}^n i\cdot \Pr{}{\text{stop at step }i} &= \sum_{i=1}^n i \cdot (1-p) p^{i-1}  \\ \notag
& = \frac{1-p^n-np^n+np^{n+1}}{1-p}\\ 
& = \frac{1-p^n}{1-p}-np^n.\label{eq:sum_max_p}
\end{align}





\textbf{Order selection.} In the versions where the order can be chosen, note that the cost of the gambler will remain the same since the variables are all i.i.d. If all variables are 0 the prophet still just opens the first and stops, otherwise she opens first one of the boxes that give $n$ and pay -1. Therefore 
\[\text{Prophet} = p^n(-1) + (1-p^n)(n-1) \approx -1+n(n^{-0.5}-1/2 n^{-1}) = n^{0.5} - 3/2 >0\]
which is the same as before, and the example still holds.

\textbf{Commitment vs no-commitment.} This example holds regardless of whether we have commitment or not. The only acceptable value is $n$; when the $n$ is found any reasonable algorithm stops immediately. If no $n$ is found, the algorithm just takes the last $0$ found, and there is no point returning and taking a different $0$.

\end{example}



\section{Results for Minimization}\label{sec:min_examples}
\begin{example}[Ratio is $\infty$]\label{ex:MIN}
Consider the following i.i.d. instance:
\[X_k = \begin{cases}
  0  & \text{w.p. } n^{-1/2}  \\
   n & \text{w.p. } 1-n^{-1/2},
\end{cases}
\]
for $k=1,\ldots n$, and cost 1. Whenever there is a 0, the Prophet will put it in first position, and pick it. Otherwise, he will pick the first variable. 
His payoff is then 
\[
\E{}{\text{Prophet}} = \lp(1-(1-n^{-1/2})^n\rp) \cdot 1 + (1-n^{-1/2})^n \cdot (n+1) = n \cdot \lp(1-n^{-1/2} \rp) + 1,
\]

\noindent
therefore $\E{}{\text{Prophet}} \rightarrow 1$ as $n\rightarrow \infty$. The gambler will wait for the first 0 to appear, and pick it incurring cost
\begin{align*}
    ALG & = \sum_{i=1}^n \Pr{}{\text{There is a 0 at }i} (i+0) + \Pr{}{\text{No 0 exists}}\cdot (n+n) \\
     & = \sum_{i=1}^n n^{-1/2}\cdot\lp(1 - n^{-1/2}\rp)^{i - 1}\cdot i + \lp(1 - n^{-1/2}\rp)^n \cdot 2n\\
     & = n^{1/2}  + \lp(1 - n^{-1/2}\rp)^n \lp(n-n^{1/2}\rp)\\
     & \approx n^{1/2}
\end{align*}
Therefore the ratio is $\alg/\text{Prophet} = n^{1/2} \rightarrow \infty$ as $n$ grows.
\end{example}

\begin{theorem}[Ratio is $\frac{e}{e-1}$]\label{ex:ski_rental} 
The Ski-Rental algorithm~\cite{KarlManaMcgeOwic1990} obtains $\frac{e}{e-1} \simeq 1.58$ and this ratio is tight.

\end{theorem}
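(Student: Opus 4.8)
The statement bundles two claims — a $\frac{e}{e-1}$-competitive randomized strategy and a matching impossibility — so the plan is to prove an upper bound and a lower bound that meet. I would work in the continuous relaxation of ski rental, normalizing the purchase cost to $B$ and letting the ``buy time'' $x$ range over $[0,B]$; the bound for the Ski-prophet then appears cleanly in the limit $B\to\infty$, which is exactly where the constant $\frac{e}{e-1}$ is attained.

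\emph{Upper bound.} The randomized algorithm draws a buy time $x\in[0,B]$ from a density $f$, rents each day, and purchases the first time the horizon reaches $x$. On an instance with last ski day $T$ its cost is $T$ when $T<x$ and $x+B$ when $T\ge x$, whereas the Ski-prophet pays $\min(T,B)$. I would pin down $f$ by demanding that the ratio of expected cost to the prophet be a single constant $c$ for every $T\in(0,B]$: writing this as $\int_0^T(x+B)f(x)\,dx+T\int_T^B f(x)\,dx=cT$ and differentiating twice collapses it to $Bf'=f$, so $f(x)=Ae^{x/B}$. Normalization $\int_0^B f=1$ gives $A=\frac{1}{B(e-1)}$, and evaluating the first-order condition at $T=B$ forces $c=\frac{e}{e-1}$. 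A one-line check that the regime $T\ge B$ is dominated (the algorithm never buys after day $B$, so its cost equals the $T=B$ value while the prophet pays $B$) shows the ratio is at most $\frac{e}{e-1}$ for all $T$.

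\emph{Tightness.} For the lower bound I would invoke Yao's principle: it is enough to produce one distribution $\mu$ over horizons $T$ under which every deterministic strategy has expected cost at least $\frac{e}{e-1}$ times the expected optimum. Every deterministic strategy is a threshold, described by the day $y$ on which it would buy, so its expected cost is $\phi(y)=\E{}{T\cdot\ind{T<y}}+(y+B)\,\Pr{}{T\ge y}$. Mirroring the computation above, I would choose $\mu$ with tail $G(y):=\Pr{}{T\ge y}=e^{-y/B}$ on $[0,B]$ — an exponential part together with an atom of mass $e^{-1}$ at $B$ — so that $\phi'(y)=G(y)+B\,G'(y)=0$ and $\phi$ is constant, equal to $\phi(0)=B$. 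Since $\E{}{\min(T,B)}=\E{}{T}=\int_0^B e^{-y/B}\,dy=B(1-e^{-1})$, every deterministic strategy has ratio $\frac{B}{B(1-e^{-1})}=\frac{e}{e-1}$, and Yao's principle lifts this to a bound against all randomized strategies.

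The main obstacle is the bookkeeping of the continuous model rather than any single hard idea: correctly handling the atom at $T=B$ and the $T\ge B$ tail in both directions, and justifying that single-buy-day thresholds are without loss of generality so that the one-parameter family $\phi(y)$ really exhausts all deterministic algorithms. The reason the same constant governs both bounds is the duality between the two indifference conditions — the upper bound equalizes the ratio across all horizons $T$, while the lower bound equalizes expected cost across all buy times $y$ — and making that symmetry explicit is what guarantees the two bounds coincide at $\frac{e}{e-1}$.
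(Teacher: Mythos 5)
There is a genuine gap: you have proved the classical competitive-ratio theorem for standard ski rental (randomized buy-time algorithm versus Yao's principle), but that is not the statement the paper is making. In context, the theorem concerns the minimization stopping problem with observation costs, no commitment, and no order selection: a gambler sequentially inspects $X_1,\dots,X_n$ drawn from known distributions, pays an inspection cost each step, may return to any value already seen, and competes against a prophet who knows the realized values. Your proposal never touches this problem. The paper's upper bound hinges on a reduction that your write-up is missing entirely: setting $a_t=\min_{i\leq t}X_i$ turns the stopping problem into ski rental with a \emph{decreasing buying cost} (this is exactly where the no-commitment assumption is used, since ``buying'' means going back to the best value seen so far). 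Your fixed-$B$ analysis with density proportional to $e^{x/B}$ on $[0,B]$ does not apply to this variant: there is no single known purchase price, and the sequence $a_t$ can decrease adversarially over time. The paper instead invokes Lemma 3.1 (and Lemma 7.1, for non-uniform inspection costs) of Chawla et al., which gives an online algorithm with $t-1+a_t\leq\frac{e}{e-1}\min_j(j-1+a_j)$ for arbitrary decreasing sequences; some such statement is indispensable, and the classical fixed-$B$ argument does not yield it.

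Your tightness argument is closer in spirit but also lives in the wrong model. Yao's principle over adversarially chosen horizon distributions establishes a lower bound for the abstract ski-rental game; here the hard instance must itself be a legal instance of the stochastic stopping problem (known distributions, unit inspection costs), and the bound must hold for the \emph{optimal} online policy on that instance. The paper does this concretely: i.i.d.\ boxes with value $n$ w.p.\ $1-\frac{1}{n}$ and $0$ w.p.\ $\frac{1}{n}$, where the optimal online algorithm is Weitzman's rule with expected cost $n$, while the prophet pays roughly $n\left(1-e^{-1}\right)$, giving the ratio $\frac{e}{e-1}$ in the limit. Your exponential-with-atom distribution over $T$ is the continuous analogue of the geometric first-zero time in that instance, so the idea transfers, but as written you would still need to (i) realize the distribution via boxes and values, (ii) argue that on a two-valued instance every online policy reduces to a threshold ``give up after $y$ steps'' rule so that your one-parameter family $\phi(y)$ is exhaustive, and (iii) compare against the prophet $\mathbb{E}\left[\min(T,n+1)\right]$ rather than an abstract $\min(T,B)$. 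Finally, the paper's theorem also asserts the bound against a weaker prophet who knows the values but not their order, which your proposal does not address.
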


\begin{proof}
    We first show the approximation ratio then the tightness.\\
    
\textbf{Approximation ratio.} Consider a variant of the Ski-rental problem called the \emph{Decreasing Buying cost Ski-rental (DB-Ski Rental)} defined as follows; we are going skiing for an unknown number of days and each day $t$ we can either rent the skis (for one day) paying $1$ or buy them for the rest of the ski season for $a_t$, where $a_t \geq a_{t+1}$ for all times $t\geq 1$. The prophet in this problem can see the whole sequence, and therefore pays $\min_{i\leq t} (i+a_i)$. The traditional Ski-rental of Section~\ref{sec:prelims} is a special case of this variant, where $a_t=B$ until the final (unknown) ski day, then $a_t=0$. 

Observe that DB-Ski-rental is exactly the same as the optimal selection problem when there is observation cost but no commitment (i.e. the version of this theorem). To see this define $a_t = \min_{i\leq t} X_i$ i.e. the buying cost at step $t$ is the minimum of all the variables seen so far, which is a decreasing sequence (as DB-Ski Rental requires). Any decision made by the DB-Ski-rental algorithm, directly translates to a decision in our problem; 
\begin{itemize}
    \item \textbf{If} DB-Ski-Rental rents (cost is 1): we inspect the next variable, paying 1.
    \item \textbf{Else} DB-Ski-Rental buys (cost is $a_t$): we stop and select $X_p$ where $p = \text{argmin}_{i\leq t} X_i$, and pay the exact same cost. At this step we crucially use that there is no commitment, and we are allowed to turn back and select a variable (potentially) seen earlier. 
\end{itemize}

Using Lemma 3.1 from~\cite{ChawGergTengTzamZhan2020}, which we copy below for convenience, we get that there exists an algorithm that obtains $\frac{e}{e-1}$ for our problem.

\textbf{Lemma 3.1 from~\cite{ChawGergTengTzamZhan2020}} \emph{Given any sequence $a_1\geq a_2 \geq \ldots$, there exists an online algorithm that chooses a stopping time $t$ so that}\footnote{You may notice that in this lemma the time is shifted by $1$ (i.e. we have $t-1$ instead of $t$). This is a technicality to allow the player to stop before renting, by having an immediate outside option of cost $a_1$. In our problem we set $a_1=\infty$, to avoid allowing the player to stop without choosing anything, so that now $X_1$ corresponds to $t=2$ and so on.}  
\begin{equation}\label{eq:ski_rental_FOCS}
    t-1+a_t \leq \frac{e}{e-1} \min_{j} (j-1+a_j)
\end{equation}

Furthermore, in Lemma 7.1 of \cite{ChawGergTengTzamZhan2020} the authors extend Lemma 3.1 to the case where the renting costs $p_i$ for each day $i$ are not all $1$ and equation \ref{eq:ski_rental_FOCS} becomes
\begin{equation}
        a_t + \sum_{i=1}^{t-1} p_i \leq \frac{e}{e-1} \min_{j} \lp(a_j + \sum_{i=1}^{j-1}p_i\rp).
\end{equation}

Therefore, the factor of $\frac{e}{e-1}$ also holds for our problem for non-uniform unit costs. 

\textbf{Tightness.}
For this part, the following instance shows that we cannot do better than $\frac{e}{e-1}$. 
Consider the following i.i.d. instance, for $k=1 \dots n$
\[X_k = \begin{cases}
    n & \text{w.p. } 1-\frac{1}{n} \\ 
    0 & \text{ w.p. } \frac{1}{n}.
\end{cases}\]

Since the instance is i.i.d., it does not matter if we can select the order or not, and we know that the optimal online algorithm is Weitzman~\cite{Weit1979} (also described in Algorithm~\ref{alg:weitzman_main}). We calculate the reservation value for all variables\footnote{The reservation value is the same for all variables since they are i.i.d. with the same cost.} using Equation~\ref{eq:reservation_value} to be $\s=n$. Weitzman's algorithm is therefore indifferent between stopping at the first variable and trying to find a $0$, and assume without loss of generality that it chooses to play. The expected cost of the gambler then is $ALG = \E{}{\weitz} = n$. 


        
  We calculate the prophet's expected cost, using $q=\lp(1-\frac{1}{n}\rp)$ for simplicity. Let $k \geq 1$, and call $T \in \left\{1\dots,n\right\}$ the random time for which the first 0 appears; if there is no such a 0, set $T=n+1$. 
We have for $1\leq i \leq n$ 
\begin{equation}\label{eq:worst_case_distr}
          \Pr{}{T=i}= \frac{1}{n} q^{i-1}.
     \end{equation}
%
We now calculate the cost of the prophet
     \begin{align*}
         \text{Ski-prophet} &= \sum_{i=1}^{n+1} \Pr{}{T=i} \min(i, n+1) & \\
         & = \sum_{i=1}^{n} i\cdot \Pr{}{T=i} + (n+1)\cdot \sum_{i=n+1}^{n+1} \Pr{}{T=i} & \\
         & = \sum_{i=1}^n \frac{i}{n}q^{i-1} +\frac{n+1}{n} \sum_{i=n+1}^{+\infty} q^{i-1}& \text{Using Eq~\eqref{eq:worst_case_distr}}\\ 
         & =  n(1-2q^n) +  (n+1) q^{n}\\
         & = (n-1) \lp( 1- q^n\rp)+1 
     \end{align*}
\noindent
where in the fourth equality we used again equations~\eqref{eq:sum_i_qi}  and \eqref{eq:sum_qi} below for $k=n$.
     where in the third inequality we used the following equations: 
    \begin{align}
         \sum_{i=1}^{k} i q^{i-1} &= \frac{1+q^{k}(kq-k-1)}{(1-q)^2} \label{eq:sum_i_qi}\\
         \sum_{i=k+1}^n q^{i-1} & = \frac{q^{k}}{1-q}\label{eq:sum_qi}
        \end{align}

Putting it all together we have that
\begin{eqnarray*}
\frac{ALG}{Prophet} \geq \frac{n}{(n-1) \lp( 1- q^n\rp)+1}
\end{eqnarray*}
The right-hand side term goes to $(1-e^{-1})^{-1}$ when $n$ tends to infinity, which proves the result. 

\textbf{A weaker prophet.} Even if we compete against a  weaker prophet who knows the values of the variables but not the order of arrival, we still cannot do better than $\frac{e}{e-1}$. The weak prophet stops at the first variable if there is no $0$, which happens with probability $\lp(1-\frac{1}{n}\rp)^n \approx \frac{1}{e}$, and tries to find a $0$ otherwise. In the case there is no $0$ Weitzman overpays by $n$. The cost of the prophet then is 
\[\text{Weak-Prophet} = \E{}{\weitz} - \frac{n}{e} = \E{}{\weitz}\lp( 1- \frac{1}{e}\rp).\]
Therefore $\E{}{\weitz} \approx 1.58\cdot \text{Weak-Prophet}$
\end{proof}

\section{Conclusion/Final Remarks}
We make some final observations on the results presented.
\begin{itemize}
    \item \textbf{Remark 1}: In the minimization version, being able to turn back and take an earlier option makes the approximation ratio from arbitrarily bad (upper half of Table~\ref{table:minimization_summary}) to constant (bottom half of Table~\ref{table:minimization_summary}).
    \item \textbf{Remark 2}: Example~\ref{ex:ski_rental} show that the gap between the online optimal (i.e. Weitzman) and a prophet who knows both the order and the values or a prophet who only knows the values is $1.58$. Generally the gaps between the online optimal and the prophet are not easy to characterize (e.g. see \cite{NiazSabeSham2018,DuttGergRezvTengTsig2023}) since the online optimal usually has a complex structure. 
    
\end{itemize}

\bibliographystyle{alpha}
\bibliography{bibliography}

\newcommand{\etalchar}[1]{$^{#1}$}
\begin{thebibliography}{KMMO90}

\bibitem[BC23a]{BeyhCai2023}
Hedyeh Beyhaghi and Linda Cai.
\newblock Pandora's problem with nonobligatory inspection: Optimal structure
  and a {PTAS}.
\newblock In Barna Saha and Rocco~A. Servedio, editors, {\em Proceedings of the
  55th Annual {ACM} Symposium on Theory of Computing, {STOC} 2023, Orlando, FL,
  USA, June 20-23, 2023}, pages 803--816. {ACM}, 2023.

\bibitem[BC23b]{BubnaChipl2022}
Archit Bubna and Ashish Chiplunkar.
\newblock Prophet inequality: Order selection beats random order.
\newblock In {\em Proceedings of the 24th ACM Conference on Economics and
  Computation}, pages 302--336, 2023.

\bibitem[CFH{\etalchar{+}}17]{CorrFoncHoekOostVred2017}
Jos{\'{e}}~R. Correa, Patricio Foncea, Ruben Hoeksma, Tim Oosterwijk, and Tjark
  Vredeveld.
\newblock Posted price mechanisms for a random stream of customers.
\newblock In {\em EC'17: The 18th {ACM} Conference on Economics and
  Computation, Cambridge, MA, USA}, pages 169--186, 2017.

\bibitem[CGT{\etalchar{+}}20]{ChawGergTengTzamZhan2020}
Shuchi Chawla, Evangelia Gergatsouli, Yifeng Teng, Christos Tzamos, and Ruimin
  Zhang.
\newblock Pandora's box with correlations: Learning and approximation.
\newblock In Sandy Irani, editor, {\em 61st {IEEE} Annual Symposium on
  Foundations of Computer Science, {FOCS} 2020, Durham, NC, USA, November
  16-19, 2020}, pages 1214--1225. {IEEE}, 2020.

\bibitem[DGR{\etalchar{+}}23]{DuttGergRezvTengTsig2023}
Paul D{\"{u}}tting, Evangelia Gergatsouli, Rojin Rezvan, Yifeng Teng, and
  Alexandros Tsigonias{-}Dimitriadis.
\newblock Prophet secretary against the online optimal.
\newblock In Kevin Leyton{-}Brown, Jason~D. Hartline, and Larry Samuelson,
  editors, {\em Proceedings of the 24th {ACM} Conference on Economics and
  Computation, {EC} 2023, London, United Kingdom, July 9-12, 2023}, pages
  561--581. {ACM}, 2023.

\bibitem[KMMO90]{KarlManaMcgeOwic1990}
Anna~R. Karlin, Mark~S. Manasse, Lyle~A. McGeoch, and Susan~S. Owicki.
\newblock Competitive randomized algorithms for non-uniform problems.
\newblock In David~S. Johnson, editor, {\em Proceedings of the First Annual
  {ACM-SIAM} Symposium on Discrete Algorithms, 22-24 January 1990, San
  Francisco, California, {USA}}, pages 301--309. {SIAM}, 1990.

\bibitem[KS77]{KrenSuch1977}
Ulrich Krengel and Louis Sucheston.
\newblock Semiamarts and finite values.
\newblock {\em Bulletin of the American Mathematical Society}, 83(4):745--747,
  07 1977.

\bibitem[KS78]{KrenSuch1978}
Ulrich Krengel and Louis Sucheston.
\newblock On semiamarts, amarts, and processes with finite value.
\newblock {\em Advances on Probabilty and Related Topics}, 4:197--266, 1978.

\bibitem[LM]{LivaMeht2022}
Vasilis Livanos and Ruta Mehta.
\newblock Minimization is harder in the prophet world.
\newblock In {\em Proceedings of the 2024 Annual ACM-SIAM Symposium on Discrete
  Algorithms (SODA)}, pages 424--461.

\bibitem[NSS18]{NiazSabeSham2018}
Rad Niazadeh, Amin Saberi, and Ali Shameli.
\newblock Prophet inequalities vs. approximating optimum online.
\newblock In George Christodoulou and Tobias Harks, editors, {\em Web and
  Internet Economics - 14th International Conference, {WINE} 2018, Oxford, UK,
  December 15-17, 2018, Proceedings}, volume 11316 of {\em Lecture Notes in
  Computer Science}, pages 356--374. Springer, 2018.

\bibitem[SC84]{Samu1984}
Ester Samuel-Cahn.
\newblock Comparison of threshold stop rules and maximum for independent
  nonnegative random variables.
\newblock {\em Annals of Probability}, 12(4):1213--1216, 11 1984.

\bibitem[SC92]{Samu1992}
Ester Samuel-Cahn.
\newblock {A Difference Prophet Inequality for Bounded I.I.D. Variables, with
  Cost for Observations}.
\newblock {\em The Annals of Probability}, 20(3):1222 -- 1228, 1992.

\bibitem[Wei79]{Weit1979}
Martin~L Weitzman.
\newblock {Optimal Search for the Best Alternative}.
\newblock {\em Econometrica}, 47(3):641--654, May 1979.

\end{thebibliography}

\end{document}